\algnewcommand\True{\textbf{true}\space}
\algnewcommand\False{\textbf{false}\space}
\algnewcommand\Or{\textbf{Or}\space}
\algnewcommand\And{\textbf{And}\space}
\algnewcommand\Xor{\textbf{Xor}\space}
\algnewcommand\To{\textbf{to}\space}
\algnewcommand\NewLineComment[1]{\Statex $\triangleright$ \footnotesize\textbf{#1}}
\algrenewcommand{\algorithmiccomment}[1]{\space \space *** \footnotesize{#1} ***}
\tikzset{
  treenode/.style = {align=center, inner sep=2pt, text centered,
    font=\sffamily},
  arn_r/.style = {treenode, circle, black, font=\sffamily\bfseries, draw=black,
    text width=1.5em},
  arn_n/.style = {treenode, rectangle split, rectangle split parts =2, draw=black,  rectangle split part fill={white, black!20},
    text width=1.5em, very thick},
  every edge/.append style={anchor=south,auto=falseanchor=south,auto=false,font=2.5 em},
}
\theoremstyle{plain}
\newtheorem{thm}{Theorem}[section]
\newtheorem*{cor}{Corollary}
\newtheorem{rem}{Remark}[section]
\theoremstyle{definition}
\newtheorem{defn}{Definition}[section]
\newtheorem{exmp}{Example}
\newtheorem{prob}{Problem}
\theoremstyle{remark}
\begin{document}

\title{Linear Algorithm for Conservative Degenerate Pattern Matching}
\author[1]{Maxime Crochemore}
\author[1]{Costas S. Iliopoulos}
\author[1]{Ritu Kundu}
\author[1]{Manal Mohamed}
\author[1]{Fatima Vayani}
\affil[1]{Department of Informatics, King's College London}

\renewcommand\Authands{ and }

\maketitle
\begin{abstract}
A \emph{degenerate symbol} $\tilde{x}$ over an alphabet $\Sigma$ is a non-empty subset of $\Sigma$, and a sequence of such symbols is a \emph{degenerate string}. A degenerate string is said to be \emph{conservative} if its number of non-solid symbols is upper-bounded by a fixed positive constant $k$. We consider here the matching problem of conservative degenerate strings and present the first linear-time algorithm that can find, for given degenerate strings $\tilde{P}$ and $\tilde{T}$ of total length $n$ containing $k$ non-solid symbols in total, the occurrences of $\tilde{P}$ in $\tilde{T}$ in $O(nk)$ time.
\end{abstract}

\section{Introduction}
\emph{Degenerate}, or \emph{indeterminate}, strings are found in Biology, Musicology and Cryptography.
They are defined by the occurrence of one or more positions which are represented by sets of symbols.
In \emph{conservative degenerate strings}, the number of such occurrences is bounded by $k$.
In music, single notes may match chords.
In encrypted and biological sequences, a position in one string may match exactly with various symbols in other strings.
\\

Previous algorithmic research of degenerate strings has been focused on pattern matching.
Pattern matching in degenerate strings is particularly relevant in the context of coding biological sequences.
Due to the degeneracy of the genetic code, two dissimilar DNA sequences can be translated into two identical protein sequences.
Without taking this degeneracy into account, many associations between biological entities can be overlooked.
For example, the following six DNA codons are all translated into the amino acid Leucine: $TTA$, $TTG$, $CTT$, $CTC$, $CTA$ and $CTG$.
This example highlights the significance of solving problems relating to degeneracy in strings.
In fact, special symbols to represent sets of DNA symbols have long been established by the IUPAC-IUBMB Biochemical Nomenclature Committee \cite{IUPAC-1985}.
For example, $R$ represents any purine ($A$ or $G$), $Y$ represents any pyrimidine ($C$, $T$ or $U$) and $N$ represents any nucleic acid.
An example of practical implications of such research is in the design of primers for cloning DNA sequences using PCR (Polymerase Chain Reaction).
Degenerate primers are used when their design is based on protein sequences,
which can be reverse-translated to $n^k$ different sequences, where $n$ is the length of the sequence.
\\

This paper introduces an algorithm which is a significant improvement from those published previously.
The first significant contribution for the problem of pattern matching of degenerate strings was in 1974 \cite{fischer1974string},
and was later improved \cite{muthukrishnan1994non}.
Later still, faster algorithms for the same problem were proposed \cite{indyk1998faster, kalai2002efficient}.
Since, many practical methods have been suggested \cite{holub2008fast, smyth2009adaptive, IMR06-WALCOM-CR},
as well as variations of the problem considered.
For example, a non-practical generalised string matching algorithm was introduced by Abrahamson in 1987 \cite{DBLP:journals/siamcomp/Abrahamson87}.
Most recently, Crochemore \textit{et al.} \cite{Isaac2014} reported an algorithm to find the shortest solid cover in a degenerate string with time complexity $O(2^k)$.
We report here a major improvement in time: $O(kn)$.
Further to the problem of pattern matching, the linear algorithm reported here can be applied to many different problems,
including finding cover and prefix arrays.
\\

The rest of the paper is organised in the following format:
The next section introduces the vocabulary and the notions that will be used in this paper. Section \ref{sec:con-deg-str-matching} formally defines the problem and presents the algorithm we have proposed. The algorithm is analysed in Section \ref{sec:analysis} and finally, Setion \ref{sec:conclusion} concludes the paper.

\section{Preliminaries}
\label{sec:prelim}
To provide an overview of our results we begin with a few definitions, generally following \cite{IMR06-WALCOM-CR,Isaac2014}. An \emph{alphabet} $\Sigma$ is a non-empty finite set of symbols of size $\vert \Sigma \vert$. A \emph{string} over a given alphabet is a finite sequence of symbols. The \emph{length} of a string $x$ is denoted by $\vert x \vert$. The \emph{empty string} is denoted by $\varepsilon$. The set of all strings over an alphabet $\Sigma$ (including empty string $\varepsilon$) is denoted by $\Sigma^*$.
\\

A \emph{degenerate symbol} $\tilde{x}$ over an alphabet $\Sigma$ is a non-empty subset of $\Sigma$, i.e., $\tilde{x} \subseteq \Sigma$ and $\tilde{x} \neq \emptyset$. $\vert \tilde{x} \vert$ denotes the size of the set and we have $1 \leq \vert \tilde{x} \vert \leq  \vert \Sigma \vert$.
A finite sequence $\tilde{X} = \tilde{x}_1\tilde{x}_2 \ldots \tilde{x}_n$ is said to be a \emph{degenerate string} if $\tilde{x}_i$ is a degenerate symbol for each $i$ from $1$ to $n$. In other words, a \emph{degenerate string} is built over the potential $2^{\vert \Sigma \vert} - 1$ non-empty sets of letters belonging to $\Sigma$. The number of the degenerate symbols, $n$ here, in a degenerate string $\tilde{X}$ is its \emph{length}, denoted as $\vert \tilde{X} \vert$. For example, $\tilde{X} = [a,b] [a] [c] [b, c] [a] [a, b, c]$ is a degenerate string of length $6$ over $\Sigma = [a, b, c]$.
If $\vert \tilde{x}_i \vert = 1$, that is, $\tilde{x}_i$ represents a single symbol of $\Sigma$, we say that $\tilde{x}_i$ is a \emph{solid} symbol and $i$ is a \emph{solid position}. Otherwise $\tilde{x}_i$  and $i$ are said to be \emph{non-solid symbol} and \emph{non-solid position} respectively. For convenience we often write $\tilde{x}_i = c$ ($c \in \Sigma$), instead of $\tilde{x}_i = [c]$, in case of solid symbols. Consequently, the degenerate string $\tilde{X}$ mentioned in the example previously will be written as $[a, b] a c [b, c] a [a, b, c]$. A string containing only solid symbols will be called a \emph{solid string}. Also as a convention, capital letters will be used to denote strings while small letters will be used for representing symbols. Furthermore, the degeneracy will be indicated by a \emph{tilde}, for example, $ \tilde{X} $ denotes a \emph{degenerate string} while  a plain letter like $X$ represents a \emph{solid string}.
The empty degenerate string is denoted by $\tilde{\varepsilon}$.
\\

A \emph{conservative} degenerate string is a degenerate string where its number of non-solid symbols is upper-bounded by a fixed positive constant $k$. The concatenation of degenerate strings $\tilde{X}$ and $\tilde{Y}$ is $\tilde{X}\tilde{Y}$. A degenerate string $\tilde{V}$ is a \emph{substring} (resp. \emph{prefix}, \emph{suffix}) of a degenerate string $\tilde{X}$ if $\tilde{X} = \tilde{U} \tilde{V} \tilde{W}$ (resp. $\tilde{X} = \tilde{V} \tilde{W}$, $\tilde{X} = \tilde{U} \tilde{V}$) for some degenerate strings $\tilde{U}$ and $\tilde{W}$. By $\tilde{X}[i..j]$, we represent a substring $\tilde{x}_i\tilde{x}_{i+1} \ldots \tilde{x}_j$ of $\tilde{x}$.
\\

For degenerate strings, the notion of symbol equality is extended to single-symbol \emph{match} between two degenerate symbols in the
following way. Two degenerate symbols $\tilde{x}$ and $\tilde{y}$ are said to \emph{match} (represented as $\tilde{x} \approx \tilde{y}$) if $\tilde{x} \cap \tilde{y}  \neq \emptyset$. Extending this notion to degenerate strings, we say that two  degenerate strings  $\tilde{X}$ and $\tilde{Y}$ \emph{match} (denoted as $\tilde{X} \approx \tilde{Y}$ ) if $\vert \tilde{X} \vert = \vert \tilde{Y} \vert$ and corresponding symbols in $\tilde{X}$ and $\tilde{Y}$ match, i.e., for each $i = 1, \cdots , \vert \tilde{X} \vert$ we have $\tilde{x}_i \approx \tilde{y}_i$. Note that the relation $\approx$ is not transitive. A degenerate string $\tilde{X}$ is said to \emph{occur} at position $i$ in another degenerate (resp. solid) string $\tilde{Y}$ (resp. $Y$) if $\tilde{X} \approx \tilde{Y}[i..i+\vert \tilde{X}]\vert-1]$ (resp. $\tilde{X} \approx Y[i..i+\vert \tilde{X}]\vert-1]$).

\section{Conservative Degenerate String Matching}
\label{sec:con-deg-str-matching}

\begin{prob}
\label{prob1}
Given a conservative degenerate pattern $\tilde{P}$ with $k$ non-solid symbols, and a solid text $T$, find all positions in $T$ at which $\tilde{P}$ occurs.
\end{prob}

\begin{exmp}
\label{example1}
We consider a degenerate pattern, $\tilde{P} = a[bc]da[bd]$ with $k=2$ and a text, $T = dacdabdadcabdac$ . 
Table \ref{table:example1} shows that $\tilde{P}$ occurs in $T$ at positions $2$ and $5$.
\begin{table}[!h]
\caption{Occurrence of $\tilde{P}$ in $T$}
\label{table:example1}
\centering
\resizebox{\columnwidth}{!}{%
\begin{tabular}{|c|*{15}{c}|}
\firsthline
  $i$    & 1 & 2 & 3 & 4 & 5 & 6 & 7 & 8 & 9 & 10 & 11 & 12 & 13 & 14 & 15\\
\hline
$t$ & $d$ & $a$ & $c$ & $d$ & $a$ & $b$ & $d$ & $a$ & $d$ & $c$ & $a$ & $b$ & $d$ & $a$ & $c$ \\
\hline
Matches & & $a$ & $[bc]$ & $d$ & $a$ & $[bd]$ & & & & & & & & & \\
        & & & &  & $a$ & $[bc]$ & $d$ & $a$ & $[bd]$ & & & & & & \\
\lasthline
\end{tabular}
}
\end{table}

\end{exmp}

For convenience, we compute a table $Pre[k, \vert \Sigma \vert]$ such that for each non-solid position $i$ ($1 \leq i \leq k$) and each letter $a \in \Sigma$, we have $Pre[i,a] = 1$ if $a \in \tilde{P}[i]$ and $0$ otherwise. After such $O(k\vert\Sigma\vert)$-time preprocessing, we can check in $O(1)$ time whether a non-solid position in $\tilde{P}$ matches a position in $T$ or not.

\subsection*{An Outline of Our Approach}
Our algorithm to solve Problem \ref{prob1} is built on the top of an adapted version of the sequential algorithm presented by Landau and Vishkin to find all occurrences of a (solid) pattern $P$ of length $m$ in a (solid) text $T$ of length $n$ with at most $e$ differences each \cite{Landau:1989:FPS:67205.67206}, where a difference can be due to either a mismatch between the corresponding characters of the text and the pattern, or a superfluous character in the text, or a superfluous character in the pattern. The modification required for our strategy is to treat only mismatches as the differences in Landau and Vishkin's algorithm. On the lines of the original Landau and Vishkin's algorithm, the modified one works in the following two steps .

\begin{description}
\item{Step 1:} Compute the suffix tree of the string obtained after concatenating the text, the pattern and a character $\#$ which is not present in $\Sigma \cup \Lambda$, i.e. $TP\#$; using the serial algorithm of Weiner \cite{Weiner:1973:LPM:1441424.1441766}.
\item{Step 2:} Let $Mismatch_{i, j}$ be the position in the pattern at which we have $j^{th}$ mismatch (when defined) between $T[i+1 .. i+m]$ and $P[1 .. m]$. In other words, $Mismatch_{i, j} = f$ represents $j^{th}$ mismatch from left to right and implies that $t_{i+f} \neq p_f$. In this step, we find $Mismatch_{i, j}$ for each $i$ and $j$ such that $0 \leq i \leq n-m$ and $1 \leq j \leq c+1$ where $c$ denotes the maximum of the two : $e$ and the total number of mismatches between $T[i+1 .. i+m]$ and $P[1 .. m]$. If some $Mismatch_{i, j} = m+1$, it signifies that there is an occurrence of the pattern in the text, starting at $t[i+1]$, with at most $e$ mismatches. $Mismatch_{i, j}$ can be computed from $Mismatch_{i, j-1}$ as follows :\\
Let $LCA_{si, sj}$ be the lowest common ancestor (in short LCA) of the leaves of the suffixes $T[si+1, n]$ and $P[sj+1]$ in the suffix tree and  $\vert LCA_{si, sj} \vert$ denotes its length.  $Mismatch_{i, j-1} =f$ implies that $T[i+1 .. i+f]$ and $P[1 .. f]$ is matched with $j-1$ mismatches. We want to find the largest $q$ such that $T[i+f+1 .. i+f+q] = P[f+1 .. f+q]$ and $t_{i+q+1} \neq p_{q+1}$, so that $Mismatch_{i, j} = q+1$. The desired $q$ is same as length of $LCA_{i+f, f}$. Thus, $Mismatch_{i, j} = f + \vert LCA_{i+f, f} \vert$.

\end{description}

Pseudocode for our approach is given as Algorithm \ref{algo:string-match}. It works in the following three stages :

\subsubsection*{\textsc{Stage 1}: Substitute} 
In the first stage, each of the non-solid symbols occurring in the given degenerate pattern is replaced by a unique symbol which is not present in $\Sigma$.
$\Lambda$ represents the set of these unique symbols i.e. $\{ \lambda_{i}\}$ such that $0 < i \leq k$. It is to be noted that the pattern, $p_{\lambda}$,  obtained by such a substitution will be a solid string. For example, 
$P_{\lambda}$ obtained from $\tilde{P}$ in Example \ref{example1} is given in Table \ref{table:example1_Stage1}.
\begin{table}
\caption{[\textsc{Stage 1}: Substitute]\space $P_{\lambda}$ obtained from $\tilde{P}$}
\label{table:example1_Stage1}
\centering
\resizebox{!}{!}{%
\begin{tabular}{|l|ccccc|}
\firsthline
 $\tilde{P}$ & $a$ & $[bc]$ & $d$ & $a$ & $[bd]$\\
\hline
$P_{\lambda}$ &  $a$ & $\lambda_1$ & $d$ & $a$ & $\lambda_2$\\
\lasthline
\end{tabular}
}
\end{table}

\begin{defn}
We define \emph{$\lambda$ positions} as the positions in $P_{\lambda}$ which contain $\{ \lambda_{i}\} \in \Lambda$. Note that these are same as the non-solid positions in $\tilde{P}$.
\end{defn}
\subsubsection*{\textsc{Stage 2}: Approximate Pattern Search}
The next stage comprises of using modified Landau and Vishkin's algorithm to search pattern $P_{\lambda}$ (solid) in text $T$ (solid) with at most $k$ mismatches in each occurrence. First, a suffix tree for the (solid) string $TP_{\lambda}$ is constructed. Then, LCA queries on this suffix tree are used to compute $Mismatch_{i, j}$ for each $i$ and $j$ such that $0 \leq i \leq n-m$ and $1 \leq j \leq k+1$. As explained in Remark \ref{remark: kLambdaMismatch}, $j$ will vary up to $k+1$ in $P_{\lambda}$'s case. Every $i$, such that $Mismatch_{i, k+1} = m+1$, marks the beginning of an occurrence of $P_{\lambda}$ in $T$ (at $i+1$) and thus added to the set $ApproximateMatch$.\\

Figure \ref{Fig: suffixTree} demonstrates the suffix tree for the string obtained from concatenating $T$ from Example \ref{example1} and $P_{\lambda}$ from the previous step, i.e $TP_{\lambda}$ which is $dacdabdadcabdaca\lambda_1da\lambda_2\#$. Note that each node of the suffix tree is stored as a pair (start, length) that represents the contiguous substring $S[start+1 .. start+length]$. In addition, a leaf node indicates the suffix it represents. A leaf node showing $i$ denotes a suffix $S[i+1 .. \vert S \vert]$.
Table \ref{table:example1_Stage2} shows the resultant $Mismatch[0 .. n-m, 1 .. k+1]$ array . This table provides the positions in $P_{\lambda}$ where it mismatches with the corresponding character in $T$. For example, $Mismatch[7,1] = 2$ denotes that the first mismatch between $T[8, 12]$ and $P_{\lambda}$ occurs at position $2$ in  $P_{\lambda}$ and rightly so as $t[8+2] = t[10]$ (i.e. $c$) does not match with $p_{\lambda}[2]$ (i.e. $\lambda_1$). As $P_{\lambda}$ occurs in $T$ with at most $2$ mismatches at locations $2, 5$ and $11$ (rows $1,4$ and $10$ contain $6$, i.e. $m+1$), $ApproximateMatch = [1, 4, 10]$.
\begin{rem}
\label{remark: kLambdaMismatch}
There will always be a mismatch between $P_{\lambda}$ and $T$ at $\lambda$ positions as each of the  $\lambda_{i} \in \Lambda$ is unique and does not occur in $\Sigma$ and hence in $T$. As there are $k$  $\lambda$ positions, at least $k$ mismatches are bound to be there for each position $i$ in the text starting at which the pattern is being matched against. More explicitly, each occurrence recorded in $ApproximateMatch$ has $k$ mismatches exactly.
\end{rem}
\begin{figure}[!h]
\caption{\textsc{[Stage 2: Approximate Pattern Match]} Suffix Tree for $TP_{\lambda}\#$}
\label{Fig: suffixTree}
\resizebox{!}{0.45\totalheight}{
\begin{tikzpicture}[->,>=stealth',level/.style={sibling distance = 4cm/#1,
  level distance = 4cm}] 
\node [arn_r] {\scriptsize{root}}
child{node [arn_r]{0, 1}
  child{node [arn_r]{1, 1}    
    child{node [arn_n]{5, 16 \nodepart{second} 3}  edge from parent node[sloped, above]{$\#...dadb$}
    }
    child{node [arn_n]{8, 13 \nodepart{second} 6} edge from parent node[sloped, above]{$\#...bacd$}
    }
    child{node [arn_r]{2, 1}
      child{node [arn_n]{3, 18 \nodepart{second} 0} edge from parent node[sloped, above]{$\#...dbad$}
      }
      child{node [arn_n]{15, 6 \nodepart{second} 12} edge from parent node[sloped, above]{$a\lambda_1da\lambda_2\#$}
      }
      edge from parent node[sloped, above]{$c$}
    }
    child{node [arn_n]{19, 2 \nodepart{second} 17} edge from parent node[sloped, above]{$\lambda_2\#$}
    }
    edge from parent node[sloped, above]{$a$}
  }
  child{node [arn_n]{9, 12 \nodepart{second} 8} edge from parent node[sloped, above]{$cabda...\#$}
  }
  edge from parent node[sloped, above]{$d$}
}
child{node [arn_n] {16, 5 \nodepart{second} 16} edge from parent node[sloped, above]{$\#\lambda_2ad\lambda_1$}
}	
child{node [arn_r]{1, 1}
  child{node [arn_r]{2, 1}
    child{node [arn_n]{3, 18 \nodepart{second} 1} edge from parent node[sloped, above]{$\#...dbad$}
    }
    child{node [arn_n]{15, 6 \nodepart{second} 13} edge from parent node[sloped, above]{$a\lambda_1da\lambda_2\#$}
    }
    edge from parent node[sloped, above]{$c$}
  }
  child{node [arn_n]{8, 13 \nodepart{second} 7} edge from parent node[sloped, above]{$\#...bacd$}
  }
  child{node [arn_n]{16, 5 \nodepart{second} 15} edge from parent node[sloped, above]{$\lambda_1da\lambda_2\#$}
  }
  child{node [arn_n]{19, 2 \nodepart{second} 18} edge from parent node[sloped, above]{$\lambda_2\#$}
  }
  child{node [arn_r]{5, 3}
    child{node [arn_n]{8, 13 \nodepart{second} 4} edge from parent node[sloped, above]{$\#...bacd$}
    }
    child{node [arn_n]{14, 7 \nodepart{second} 10} edge from parent node[sloped, above]{$ca\lambda_1da\lambda_2\#$}
    }
    edge from parent node[sloped, above]{$bda$}
  }
  edge from parent node[sloped, above]{$a$}
}
child{node [arn_n]{19, 2 \nodepart{second} 19} edge from parent node[sloped, above]{$\lambda_2\#$}
}
child{node [arn_r]{2, 1}
  child{node [arn_n]{3, 18 \nodepart{second} 2} edge from parent node[sloped, above]{$\#...dbad$}
  }
  child{node [arn_r]{10, 1}
    child{node [arn_n]{11, 10 \nodepart{second} 9} edge from parent node[sloped, above]{$\#...cadb$}
    }
    child{node [arn_n]{16, 5 \nodepart{second} 14} edge from parent node[sloped, above]{$\lambda_1da\lambda_2\#$}
    }
    edge from parent node[sloped, above]{$a$}
  }
  edge from parent node[sloped, above]{$c$}
}
child{node [arn_n]{20, 1 \nodepart{second} 20} edge from parent node[sloped, above]{$\#$}
}
child{node [arn_r]{5, 3}
  child{node [arn_n] {8, 13 \nodepart{second} 5} edge from parent node[sloped, above]{$\#...bacd$}
  }
  child{node [arn_n] {14, 7 \nodepart{second} 11} edge from parent node[sloped, above]{$ca\lambda_1da\lambda_2\#$}
  }
  edge from parent node[sloped, above]{$bda$}
}
; 
\end{tikzpicture}
}
\end{figure}
\begin{table}[!h]
\caption{[\textsc{Stage 2}: Approximate Pattern Search] \space $Mismatch$ array}
\label{table:example1_Stage2}
\centering
\resizebox{!}{!}{%
\begin{tabular}{|l|*{11}{c}|}
\firsthline
$j \downarrow$ $i\rightarrow$& $0$ & $1$ & $2$ & $3$ & $4$ & $5$ & $6$ & $7$ & $8$ & $9$ & $10$ \\ 
\hline
$1$ &$1$ & $2$ & $1$ & $1$ & $2$ & $1$ & $1$ & $2$ & $1$ & $1$ & $2$ \\ 
$2$ & $2$ & $5$ & $2$ & $2$ & $5$ & $2$ & $2$ & $3$ & $2$ & $2$ & $5$ \\
$3$ & $3$ & $6$ & $3$ & $3$ & $6$ & $3$ & $4$ & $5$ & $3$ & $3$ & $6$ \\ 
\lasthline
\end{tabular}
}
\end{table}

\subsubsection*{\textsc{Stage 3}: Filter} 
An occurrence in $ApproximateMatch$ reports a mismatch at a $\lambda$ position even if there is a match at the corresponding non-solid position in reality. For example, if some $\lambda_i$ has been substituted at a non-solid position containing, say $[b,c]$, and the corresponding symbol in $T$ is $c$, clearly it is a match but that position will be recorded as a `mismatch' in array $Mismatch$ because $\lambda_i$ does not match with $c$.
Thus, a mismatch of all the $k$ mismatches, found in an occurrence of solid $P_{\lambda}$ in $T$ identified by $ApproximateMatch$ in the preceeding step, can be seen as either \emph{real} or \emph{fake} when considered with respect to the match of the degenerate pattern $\tilde{P}$ and $T$. 
\begin{defn}
A mismatch at a position, say $e = Mismatch[i, j]$, is  \emph{real} if the corresponding symbols in the degenerate pattern $\tilde{P}$ and the text $T$ mismatch, i.e. $t[i+e] \not \approx \tilde{p}[e]$. Otherwise, the mismatch is \emph{fake}. 
\end{defn}

\begin{rem}
\label{remark: solidReal}
A mismatch at a solid position will always be real while one at a $\lambda$ position can either be real or fake.
\end{rem}

\begin{defn}
An \emph{approximate occurrence} is an occurrence of $P_{\lambda}$ in $T$ with $k$ mismatches whereas an occurrence of $\tilde{P}$ in $T$ with exact match is called an \emph{exact occurrence}.
\end{defn}

\begin{rem}
\label{remark: solidMismatch}
It follows from Remarks \ref{remark: kLambdaMismatch} and \ref{remark: solidReal} that if there is a mismatch even at a single solid position, total number of mismatches will exceed $k$ and such an occurrence will not figure as an approximate occurrence. Conversely, an appoximate occurence will have mismatches only at $\lambda$ positions. 
\end{rem}

For each location $i$ in the text where an \emph{approximate occurrence} of $P_{\lambda}$ has been found ($i \in ApproximateMatch$), each position of mismatch ($\lambda$ positions) in the pattern is checked for whether the mismatch is \emph{real} or not. If an approximate occurrence of pattern $P_{\lambda}$ in text $T$ contains a real mismatch, it can be observed that it cannot represent an exact occurrence of $\tilde{P}$ whereas the approximate occurrence containing only fake mismatches will be same as an exact occurrence.
The set of all such exact occurrences is the solution to our Problem \ref{prob1}. This step, therefore, filters out and discards the approximate occurrences with real errors.\\

Table \ref{table:example1_Stage3} elucidates this stage for the example being considered. With values given by $ApproximateMatch = [1, 4, 10]$ from the previous stage, we test each $\lambda$ position from $\Lambda = [2,5]$ to check if the mismatch is real or fake. At first $\lambda$ position (i.e. $2$), $t[1+2]$ (i.e. $c$) matches $\tilde{p}[2]$ (i.e. $[b,c]$), thus the mismatch is fake. The mismatch for the second $\lambda$ position (i.e. 5) is also fake owing to the fact that $t[6] \approx \tilde{p}[5]$. Therefore, location $2$ is recorded as an occurrence of exact match of  $\tilde{P}$ in $T$. Similar is the case of location $5$ (i.e. value $4$). But for value $10$, even if the first mismatch is fake ($t[12]$ (i.e. $b$) $\approx \tilde{p}[2]$ (i.e. $[b,c])$), the fact that $t[15]$ (i.e. $c$) $\not \approx \tilde{p}[5]$ (i.e. $[b,d]$) makes the second mismatch real. Therefore, location $10$ is discarded.
And thus the correct solution to Example \ref{example1} is obtained.

\begin{table}[!h]
\caption{[\textsc{Stage 3}: Filter] \space Checking Mismatches in  Approximate Occurrences of $\tilde{P}$ in $T$}
\label{table:example1_Stage3}
\centering
\resizebox{\columnwidth}{3cm}{%
\begin{tabular}{|c|*{15}{c}|}
\firsthline
  $i$    & 1 & 2 & 3 & 4 & 5 & 6 & 7 & 8 & 9 & 10 & 11 & 12 & 13 & 14 & 15\\
\hline
$T$ & $d$ & $a$ & $c$ & $d$ & $a$ & $b$ & $d$ & $a$ & $d$ & $c$ & $a$ & $b$ & $d$ & $a$ & $c$ \\
\hline
\hline
Approximate & & $a$ & $\lambda_1$ & $d$ & $a$ & $\lambda_2$ & & & & & & & & & \\
Ocuurences  & & & $\downarrow$ & & & $\downarrow$ & & & & & & & & & \\ 
          & & & $[bc]$ & & & $[bd]$ & & & & & & & & & \\ 
          & & & $\Downarrow$ & & & $\Downarrow$ & & & & & & & & & \\  
          & & & Fake & & & Fake & & & & & & & & & \\
\cline{2-16}
          & & & &  & $a$ & $\lambda_1$ & $d$ & $a$ & $\lambda_2$ & & & & & & \\
          & & & &  &  & $\downarrow$ &  &  & $\downarrow$ & & & & & & \\
          & & & &  &  & $[bc]$ &  &  & $[bd]$ & & & & & & \\
          & & & &  &  & $\Downarrow$ &  &  & $\Downarrow$ & & & & & & \\
          & & & &  &  & Fake &  &  & Fake & & & & & & \\
\cline{2-16}
          & & & &  & & & & & & &$a$ & $\lambda_1$ & $d$ & $a$ & $\lambda_2$  \\
          & & & &  & & & & & & & & $\downarrow$ & & & $\downarrow$  \\
          & & & &  & & & & & & & & $[bc]$ & & & $[bd]$  \\
          & & & &  & & & & & & & & $\Downarrow$ & & & $\Downarrow$  \\
          & & & &  & & & & & & & & Fake & & & Real  \\

\lasthline
\end{tabular}
}
\end{table}

\begin{algorithm}[!h]
\caption{Conservative Degenerate String Matching Algorithm}
\label{algo:string-match}
\begin{algorithmic}[1]
\Require Pattern $\tilde{P}$ of length $m$,
\Statex \hskip \algorithmicindent  ~ Text $T$ of length $n$, 
\Statex \hskip \algorithmicindent   ~ Number of non-solid symbols $k$
\Ensure The set of indices of $T$  where $\tilde{P}$ occurs in $T$ 
\Statex
\NewLineComment{Substitute:}
\State $\Lambda \gets \{ \lambda_{i} | \, \lambda_i \not \in \Sigma \; and \; 0 < i \leq k\}$ 
  \State $P_{\lambda} \gets $string obtained after substituting $i^{th}$ non-solid symbol in $\tilde{P}$ with $ \lambda_{i}$ in $\Lambda$ $\forall$ $i$ such that $ 0 < i \leq k$  
\Statex
\NewLineComment{Approximate Pattern Search:}
  \State Build Suffix Tree for the string $TP_{\lambda}\#$
  \State $ApproximateMatch \gets \emptyset$
  \For{$i \gets 0$ \To $n-m$} \label{state: search}
       \State $f \gets 0$
       \For{$j \gets 1$  \To $k+1$} 
            \State $Mismatch[i, j] = f + \vert LCA_{i+f, f} \vert$ 
            \State $f \gets Mismatch[i, j]$
      \EndFor
      \If{$Mismatch[i, k+1] = m+1$} \Comment{approximate occurrence found}
        \State{Add $i$ to $ApproximateMatch$}
      \EndIf
  \EndFor \label{state: searchEnd}
\Statex
\NewLineComment{Filter:}
  \State $Occ \gets \emptyset$
  \For{each $i \in ApproximateMatch$}
      \State $flagAllFake \gets \True$
      \For{each $e \in \Lambda$}
          \If{$t[i+e] \not \approx \tilde{p}[e]$}
            \State $flagAllFake \gets \False$
            \State \textbf{Break} \Comment{real mismatch}
          \EndIf
      \EndFor
      \If{$flagAllFake$} \Comment{all fake mismatches}
        \State{Add $i+1$ to $Occ$} \Comment{exact occurrence found}
      \EndIf
  \EndFor
\State \Return{$Occ$}

\end{algorithmic}
\end{algorithm}

\section{Agorithm Analysis}
\label{sec:analysis}
\begin{thm}
Algorithm \ref{algo:string-match} correctly computes all occurrences of $\tilde{P}$ in $T$ in $O(kn)$ time complexity.
\end{thm}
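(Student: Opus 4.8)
The plan is to prove the theorem in two essentially independent parts: \emph{correctness}, that the returned set $Occ$ equals $\{\, i+1 : 0 \le i \le n-m,\ \tilde P \approx T[i+1..i+m]\,\}$, and the $O(kn)$ \emph{time bound}. The crux of correctness is the following characterization, which I would isolate as a lemma: for every $i$ with $0 \le i \le n-m$, the pattern $\tilde P$ occurs at position $i+1$ in $T$ if and only if $i \in ApproximateMatch$ and $t[i+e] \approx \tilde p[e]$ for every $\lambda$-position $e$ (i.e.\ every $e \in \Lambda$ as used in the Filter loop). Granting the lemma, part (i) is immediate: the Filter loop ranges over exactly the elements of $ApproximateMatch$ and, for each such $i$, adds $i+1$ to $Occ$ precisely when no $e \in \Lambda$ has $t[i+e] \not\approx \tilde p[e]$ --- which is exactly the condition of the lemma --- and each such test is $O(1)$ via the precomputed table $Pre$.

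To prove the lemma I would argue both directions using the Outline together with Remarks~\ref{remark: kLambdaMismatch}, \ref{remark: solidReal} and \ref{remark: solidMismatch}. For ($\Rightarrow$): if $\tilde P \approx T[i+1..i+m]$, then at every solid position $f$ we have $\tilde p_f = \{p_f\}$ with $p_f = t_{i+f}$, so $P_\lambda$ and $T[i+1..i+m]$ agree at all solid positions, while at the $k$ $\lambda$-positions they disagree since the substituted symbol lies outside $\Sigma \ni t_{i+e}$; hence $P_\lambda$ has exactly $k$ mismatches against $T$ at offset $i$, there is no $(k+1)$-st mismatch, the adapted Landau--Vishkin computation of the Outline yields $Mismatch[i,k+1]=m+1$, and Stage~2 places $i$ in $ApproximateMatch$. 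Moreover each $\lambda$-position $e$ is a non-solid position of $\tilde P$, so $\tilde P \approx T$ there gives $t[i+e]\in\tilde p[e]$, i.e.\ $t[i+e]\approx\tilde p[e]$. For ($\Leftarrow$): if $i\in ApproximateMatch$, then by Remark~\ref{remark: solidMismatch} the approximate occurrence has mismatches only at $\lambda$-positions, so $p_f=t_{i+f}$ at every solid position $f$, giving $\tilde p[f]=\{t_{i+f}\}\approx t[i+f]$; combined with the hypothesis $t[i+e]\approx\tilde p[e]$ at every $\lambda$-position, all $m$ positions match and $\tilde P \approx T[i+1..i+m]$. At this point I would also note explicitly that letting $j$ run up to $k+1$ in Stage~2 is exactly right (one more than the $k$ forced mismatches of Remark~\ref{remark: kLambdaMismatch}), and cite the Outline for the correctness of the LCA-based recurrence $Mismatch[i,j]=f+|LCA_{i+f,f}|$ and of the test $Mismatch[i,k+1]=m+1$.

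For the time bound I would walk through the three stages. Stage~1 builds $P_\lambda$ in $O(m)$. In Stage~2, Weiner's algorithm builds the suffix tree of $TP_\lambda\#$, a string of length $n+m+1$, in time linear in that length (over an integer alphabet, as is standard after renumbering, so in particular $|\Sigma|=O(n)$; and we may assume $m\le n$, as otherwise $\tilde P$ trivially has no occurrence), and a standard $O(n)$-time preprocessing makes each lowest-common-ancestor query $O(1)$. The double loop then does, for each of the $O(n)$ shifts $i$ and each of the $k+1$ values of $j$, a constant amount of work (one LCA query plus an update), totalling $O(kn)$; the tests and insertions into $ApproximateMatch$ add only $O(n)$. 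Stage~3 iterates over $|ApproximateMatch|\le n-m+1=O(n)$ shifts and, for each, over the $k$ elements of $\Lambda$, doing an $O(1)$ $Pre$-lookup each time, for $O(kn)$ total; building $Pre$ is $O(k|\Sigma|)=O(kn)$. Summing, every term is $O(kn)$, which gives the claim.

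The main obstacle is the correctness lemma, and within it the step that an entry of $ApproximateMatch$ can have mismatches \emph{only} at $\lambda$-positions: this is precisely what makes the simple Filter stage both sound (the only false positives are the fake mismatches it removes) and complete (a genuine occurrence forces equality at every solid position, hence produces exactly those $k$ mismatches and is never discarded), and it rests on the observation that the $k$ $\lambda$-positions already consume the whole mismatch budget (Remarks~\ref{remark: kLambdaMismatch} and~\ref{remark: solidMismatch}). The complexity half is routine once one invokes linear-time suffix-tree construction and constant-time LCA queries.
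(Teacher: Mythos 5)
Your proposal is correct and follows essentially the same route as the paper's proof: the same reduction to the fact that an approximate occurrence can have mismatches only at $\lambda$ positions (Remarks~\ref{remark: kLambdaMismatch} and~\ref{remark: solidMismatch}), the same real/fake filtering argument, and the same stage-by-stage accounting of the $O(kn)$ bound. Your version is somewhat tighter in that it packages correctness as an explicit biconditional lemma and proves both directions, and it fills in small technicalities (constant-time LCA preprocessing, the $m\le n$ assumption) that the paper leaves implicit.
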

\begin{proof}
Landau and Vishkin's algorithm correctly finds all occurrences of $P_{\lambda}$ in $T$ with at most $k$ mismatches in  $O(kn)$ time for a fixed alphabet. $P_{\lambda}$ differs from  $\tilde{P}$ only at the $\lambda$ positions which are equal to $k$ in number. In addition, each of the $\lambda$ positions causes a mismatch. Notably, an exact occurrence of $\tilde{P}$ in $T$ will be given by an approximate occurrence of $P_{\lambda}$ in $T$ with mismatches only at $\lambda$ positions and all of these mismatches must be fake. All such occurrences where mismatches occur only at $k$ $\lambda$ positions are guaranteed to be captured by the approximate occurrences given in $ApproximateMatch$. Also, as a consequence of Remark \ref{remark: solidMismatch}, an approximate occurrence (for which number of mismatches are at most $k$)  will never have a mismatch at any solid position. The filtering stage checks each of the mismatches in an approximate occurrence and if all of these mismatches are found to be fake, we have an exact occurrence.  Thus, at the end of the filtering stage, we have all the occurrences of an exact match only.\\ 

The substitution stage can be performed in $O(n)$ time. As mentioned previously, the approximate pattern-search stage using modified Landau and Vishkin's algorithm computes $ApproximateMatch$ in $O(kn))$ time for a fixed sized alphabet as the suffix tree is constructed in linear time with respect to the size of the input string ($n+m$) and computation of $Mismatch$ array (lines \ref{state: search} to \ref{state: searchEnd}) takes $O(kn)$ time. The filtering stage, in the worst case ($ApproximateMatch$ contains $0$ to $n-m$), needs to process each location in $T$ and to check whether mismatch at every $\lambda$ position is real or fake. This check can be performed in constant time after $O(k\vert \Sigma \vert)$-time pre-processing as mentioned earlier, which yields $O(kn)$ time requirements for this stage. Thus, in $O(k\vert \Sigma \vert + n + kn + kn)) = O(kn)$ time Algorithm \ref{algo:string-match} correctly computes all occurrences of $\tilde{P}$ in $T$.
\end{proof}

\begin{cor}
Given degenerate strings $\tilde{P}$ and $\tilde{T}$ of total length $n$ containing $k$ non-solid symbols in total, one can
compute occurrences of $\tilde{P}$ in $\tilde{T}$ in $O(nk)$ time.
\end{cor}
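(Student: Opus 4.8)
The plan is to reduce the fully-degenerate problem to the setting already solved in the Theorem by extending the \textsc{Substitute} stage to the text as well. Write $k_P$ and $k_T$ for the number of non-solid symbols in $\tilde{P}$ and in $\tilde{T}$ respectively, so that $k_P+k_T=k$. First I would replace the $i$-th non-solid symbol of $\tilde{P}$ by a fresh symbol $\lambda_i\notin\Sigma$ and the $j$-th non-solid symbol of $\tilde{T}$ by a fresh symbol $\mu_j\notin\Sigma$, all $k$ of these symbols being pairwise distinct; this yields solid strings $P_{\lambda}$ and $T_{\mu}$ over the fixed alphabet $\Sigma\cup\{\lambda_1,\dots,\lambda_{k_P},\mu_1,\dots,\mu_{k_T}\}$. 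I would then build the suffix tree of $T_{\mu}P_{\lambda}\#$ and run the modified Landau--Vishkin procedure of \textsc{Stage 2} with $e=k$ to collect every window $i$ at which $P_{\lambda}$ occurs in $T_{\mu}$ with \emph{at most} $k$ mismatches, and finally run a \textsc{Filter} stage that, for each such window, tests every recorded mismatch position $e$ for whether $\tilde{p}_e\approx\tilde{t}_{i+e}$ and keeps the window iff all of them match.

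The correctness of this reduction rests on the following window observation, which I would establish next. Fix a window starting at text position $i+1$. Since the $\lambda$- and $\mu$-symbols are pairwise distinct and absent from $\Sigma$, two symbols that are equal at some position $f$ of the aligned substituted strings can only be the same symbol of $\Sigma$, which forces $\tilde{p}_f$ and $\tilde{t}_{i+f}$ to be equal singletons and hence to match. Therefore $P_{\lambda}$ and $T_{\mu}$ can differ only at positions $f$ for which $\tilde{p}_f$ is non-solid, or $\tilde{t}_{i+f}$ is non-solid, or both, and the set of these positions is $A\cup B_i$, where $A$ is the set of $k_P$ non-solid positions of $\tilde{P}$ and $B_i$ is the set of window positions hitting a non-solid position of $\tilde{T}$; thus the number of mismatches in the window is $|A\cup B_i|\le k_P+k_T=k$. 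Hence every exact occurrence of $\tilde{P}$ in $\tilde{T}$ yields a window with at most $k$ mismatches and is reported in $ApproximateMatch$. Conversely, if the filter keeps a window then at each position of it the degenerate symbols match: at the non-mismatch positions because equality of the substituted symbols forces matching singletons, and at the mismatch positions by the explicit test $\tilde{p}_e\approx\tilde{t}_{i+e}$. So the reported set is exactly the set of exact occurrences.

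For the running time I would argue stage by stage as in the proof of the Theorem. The extended substitution is $O(n)$; the string $T_{\mu}P_{\lambda}\#$ has length $O(n)$ and its suffix tree is built in $O(n)$ time since the alphabet is fixed ($k$ being a constant); the $Mismatch$ array has $O(n)$ rows with $k+1$ entries each, every entry obtained by one LCA query, for $O(kn)$ in total; and the filter examines $O(n)$ windows, each at cost $O(k)$ after an $O(k\lvert\Sigma\rvert)$-time preprocessing of $Pre$-type tables for the non-solid positions of both $\tilde{P}$ and $\tilde{T}$, where the test $\tilde{p}_e\approx\tilde{t}_{i+e}$ is $O(1)$ when at least one side is solid and $O(\lvert\Sigma\rvert)=O(1)$ when both are non-solid. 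Summing gives $O(n+n+kn+kn)=O(kn)$.

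The part I expect to need the most care is the correctness argument in the regime where a window contains fewer than $k_T$ non-solid positions of $\tilde{T}$: there the analogue of Remark~\ref{remark: solidMismatch} fails, so an approximate occurrence may now legitimately carry a genuine solid/solid mismatch, and one must check that the filter still rejects exactly such windows — which it does, because at a solid/solid mismatch the test $\tilde{p}_e\approx\tilde{t}_{i+e}$ is false. The second delicate point is the bookkeeping when a non-solid position of $\tilde{P}$ is aligned with a non-solid position of $\tilde{T}$: this contributes a single $\lambda_i$-versus-$\mu_j$ mismatch, so the correct bound is $|A\cup B_i|\le k$ rather than a larger sum, and $e=k$ differences genuinely suffice.
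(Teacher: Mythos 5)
Your proposal is correct and follows the approach the paper evidently intends: the corollary is stated without proof, and the natural route is exactly yours, namely extending the \textsc{Substitute} stage to the text with fresh pairwise-distinct symbols and rerunning Stages 2 and 3. You correctly supply the two details the paper leaves implicit --- that the mismatch positions in any window form the set $A\cup B_i$ of size at most $k$ (so $e=k$ suffices even though Remark~\ref{remark: kLambdaMismatch}'s ``exactly $k$'' no longer holds), and that the filter must iterate over the recorded mismatch positions rather than over the $\lambda$ positions alone, since Remark~\ref{remark: solidMismatch} fails once the text is degenerate. No gaps.
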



\section{Conclusion}
\label{sec:conclusion}
In this paper, we studied the matching problem of conservative degenerate strings and presented an efficient algorithm that can find, for given degenerate strings $\tilde{P}$ and $\tilde{T}$ of total length $n$ containing $k$ non-solid symbols in total, the ocurrences of $\tilde{P}$ in $\tilde{T}$ in $O(nk)$ time, i.e. linear to the size of the input. In particular, we used the novel technique of substituting the non-solid symbols in the given degenerate strings with unique solid symbols, which let us make use of the efficient approximate pattern search solution for solid strings to get an efficient solution for  degenerate strings. It would be interesting to see how well the presented algorithm behaves in practice and to apply it to solve a vast number of problems like prefix/border array, suffix trees, covers, repetitions, seeds, decomposition etc.


\bibliographystyle{unsrtnat} 

\bibliography{Bibliography} 

\end{document}